\definecolor{rltblue}{rgb}{0,0,0.75}
\newtheorem{theorem}{Theorem}
\newtheorem{lemma}[theorem]{Lemma}
\newtheorem{remark}[theorem]{Remark}
\newtheorem{defn}[theorem]{Definition}
\newcommand{\E}{\mathbf{E}}
\newcommand{\nix}[1]{}
\begin{document}
\title{Bounds on the Network Coding Capacity for\\  Wireless Random Networks}
\author{Salah A. Aly, Vishal Kapoor, Jie Meng, Andreas Klappenecker\\
 \emph{Department of Computer Science,}\\
 \emph{Texas A\&M University}\\
 \emph{College Station, TX - 77843}\\
\{salah, vishal, jmeng, klappi\}@cs.tamu.edu}

\maketitle

\begin{abstract}
Recently, it has been shown that the max flow capacity can be achieved in a
multicast network using network coding. In this paper, we propose and analyze a more
realistic model for wireless random networks. We prove that the capacity of network
coding for this model is concentrated around the expected value of its minimum cut.
Furthermore, we establish  upper and lower bounds for  wireless nodes using Chernoff
bound. Our experiments show that our theoretical predictions are well matched by
simulation results.
\end{abstract}

\section{Introduction}\label{intro}

Traditionally, the information flow in networks is modeled as a
multi-commodity flow problem by treating the underlying network as a
flow network. Suppose that one source node in a graph has to transfer
some information to one destination node (i.e., a unicast
situation). By Menger's theorem\cite{Diestal05}, the maximum
information that can flow is upper bounded by the value of the minimum
cut between the source and the destination; this well-known result
from classical graph theory is also known as the \emph{Max-flow
Min-Cut theorem}.  One can use max-flow min-cut algorithms to compute
the maximum throughput for instance for unicast, multicast, and
multi-source multicast communications.

Recently, Ahlswede, Cai, Li, and Yeung proposed in the seminal
paper\cite{Ahlswede00} a new paradigm, called \emph{network coding}.
Their key observation was that traditional store-and-forward networks
cannot always achieve the max-flow value, whereas one can achieve this
value using network coding. The idea is based on the simple fact that
information can be replicated, mixed together and then transmitted
over links to save bandwidth. If this is properly done, then the
information can be reliably decoded at the receiver nodes, see
e.g.~\cite{Ahlswede00,Yeung02}.  The basic idea of network coding is
that the intermediate network nodes can now process, encode, and
transmit information.

Since its inception by Ahlswede et al., there has been an upsurge of
interest in network coding, see for
example~\cite{Fragouli06,Ho03routing,Ho06,Ho03randomized,
Jaggi05,Koetter03,Li03} and the references therein.  Arguably, most
network coding publications model the underlying network as a directed
acyclic graph and are typically concerned with solving single source
multicast or multi-source multicast using deterministic or randomized
encoding and decoding schemes.

In this paper, we discuss a new model for wireless random networks. In this model,
nodes are placed at random locations. Two nodes $u$ and $v$ are connected with
probability 1 if the distance between them is less than or equal to $r$; the nodes
are connected with probability $p<1$ if the distance between them is less than or
equal to $R$ but greater than $r$; otherwise $u$ and $v$ are not connected. Thus,
the model is a refinement of geometric random graphs that incorporates the potential
loss of connectivity towards the end of the transmission range, where interference
is more dominant. The main contributions of this paper are:
\begin{itemize}
\item We introduce the quasi random geometric graph model, a
model of wireless network topologies that simulates the connectivity
in mobile ad-hoc networks more realistically than the random graph
model, but is still easy to analyze.
\item We derive high-probability bounds for the network coding
capacity of quasi random geometric graphs.
\item We provide simulations results that support our bounds on the
network coding capacity.
\end{itemize}
The rest of this paper is organized as follows. In Section \ref{bmd}, we give
an overview of network coding and the previous work in capacity of network
coding. In Section \ref{model}, we present our new model. We provide our main
results in Sections~\ref{bounds} and~\ref{simulation}.

\section{Background and Model Description}\label{bmd}
In this section, we give a short summary of network coding, focusing
on the calculation of the capacity of a min cut in a weighted random
graph.  For a more in depth discussion of basic concepts and methods
of network coding, we refer the reader to the survey
paper~\cite{Fragouli06}.

\subsection{Network Coding Fundamentals}
To illustrate the power of network coding, we provide a simple
example, which is often referred to as the \emph{Wheatstone bridge},
due to its electrical circuits origin. It demonstrates that multicast
routing can achieve the maximum possible throughput in a communication
network using a coding scheme consisting of linear operations in
finite field, whereas traditional store-and-forward routing cannot
achieve the same throughput.

Consider the example shown in Fig.\ref{fig:wheatstone_bridge}(a),
where the nodes $X$ and $Y$ respectively want to send two bits $b_{1}$
and $b_{2}$ to each other. One way of doing this is to let the bit
$b_{1}$ travel on the path $X\rightarrow A\rightarrow B\rightarrow Y$
at one point of time and to let $b_{2}$ travel on the path
$Y\rightarrow A\rightarrow B\rightarrow X$ on the other. However, if
the network wants to transmit the bits simultaneously, then there is
no way to do so, as there are no disjoint paths between $X$ and~$Y$.

However, using network coding as
shown in Fig. \ref{fig:wheatstone_bridge}(b), one can save bandwidth.
In this case, both $X$
and $Y$ transmit the bits $b_{1}$ and $b_{2}$ (as shown in the figure)
and then $A$ XORs (\emph{encodes}) them together and the resulting bit
$b_{1}\oplus b_{2}$ travels over the paths $A\rightarrow B\rightarrow
Y$ and $A\rightarrow B\rightarrow X$. Since node $X$ already has of
$b_{1}$, it can recover (\emph{decode}) $b_{2}$ by the operation
$b_{1}\oplus(b_{1}\oplus b_{2})$. Similarly $Y$ can also decode
$b_{1}$.

This example illustrates that the capacity of the minimum cut (equal
to 1 in this example) can be easily achieved by network coding,
whereas two rounds are needed to achieve the multicast in the uncoded
(traditional) routing case, assuming unit capacity edges.
Because of such benefits, network coding
can be used in wireless ad-hoc networks or sensor networks to help
conserve energy and to increase the overall throughput.

\begin{figure}[htbp]
\begin{center}
\includegraphics[scale=0.3,width=8cm,height=4cm]{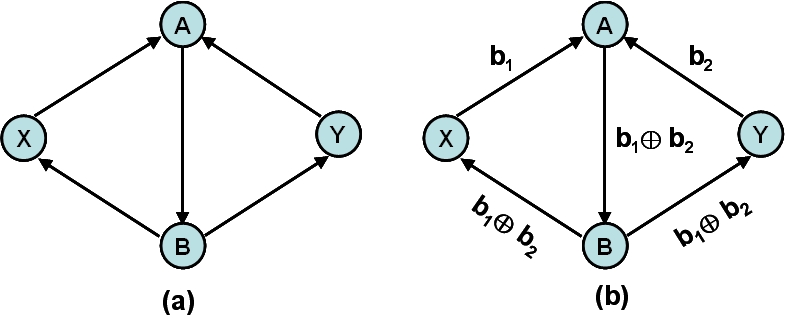}
    \caption{An example of network coding on a Wheatstone Bridge}
    \label{fig:wheatstone_bridge}
\end{center}
\end{figure}

\subsection{Network Coding in Ad-hoc Wireless Networks}

In \cite{Ramamoorthy05}, Ramamoorthy et al. modeled the capacities of the
connected edges in a wireless network as a Weighted Random Geometric Graph
($\mathcal{G}^{WRGG}$) and considered the \emph{single source multicast
problem}.
\begin{defn}[Single Source Multicast Problem]
Let $G=(V,E)$ be a graph with vertex set $V$ and edge set $E$
representing a network. Let $S \subseteq V$ be a set of sources
(origins) and $T \subseteq V$ be a set of terminals (destinations). The
multicast problem is to distribute the messages from the senders $s\in
S$ to all terminal nodes $t\in T$, allowing routing along the edges of
$G$. In network coding, the vertices are allowed to encode the
incoming bits (or packets) and send encoded versions along the
outgoing edges.  A \emph{single source multicast problem} is the
special case where one has a single sender, that is, $|S|=1$.
\end{defn}
\smallskip

Ramamoorthy et al. extended the results proved by Karger et al. in
\cite{Karger99} and used them to derive bounds for coding capacity for
a single source multicast problem in a network comprised of a single
source $s$, an intermediate network consisting of $n$ relay nodes, and
$l$ terminal nodes, having independent and identically distributed
link capacities $\sim X$ between any two nodes.  They showed that the
network coding capacity is concentrated around the value $n\E[X]$ in
such a network.

In this paper, we extend their work to a more general and more
realistic model that we call the \emph{Quasi Random Geometric Graph}
model ($\mathcal{G}^{QRGG}$). We derive high-probability bounds for
the network coding capacity of such graphs.

\section{Modeling Random Wireless Networks}\label{model}
In this section, we present our new model and study the capacity of a minimum
cut in a random wireless network.

Let $r$ be  a real number in the range $0 \leq r \leq 1$.  Recall that a Random
Geometric Graph is a graph $\mathcal{G}^{RGG}=(V,E)$ with $n$ nodes selected
independently and uniformly at random from the unit square $[0,1]^2$ in which
any two nodes $u$ and $v$  in $V$ are connected by an edge $(u,v)$ in $E$ if
and only if the Euclidean distance $d(u,v) \leq r$. Such a graph is  rough
approximation of wireless networks.

Random geometric graphs have been popular in wireless mobile ad-hoc networks
literature, since it is a theoretical model of the network topology that is
easy to analyze. However, it does not realistically model the area of
transmission, which is, in general, not a disk of radius $r$. Recently, a more
realistic model for connectivity was proposed by Kuhn, Wattenhofer, and
Zollinger~\cite{Kuhn03}. In their model, two nodes $u$ and $v$ may or may not
be connected when their Euclidean distance $d(u,v)$ is within the range
$r<d(u,v)\leq r'$, see Fig. \ref{fig:qrgg}. We use random instances of such
quasi-disk graphs to model the dynamically changing network topology in
wireless random ad-hoc networks.

\begin{figure}[htbp]
\begin{center}\includegraphics[scale=0.3,width=4cm,height=4cm]{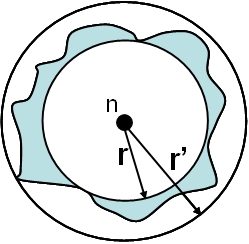}
    \caption{The transmission range and Quasi
    Disk Graph Representation for a node}
    \label{fig:qrgg}
\end{center}
\end{figure}

\begin{defn}[Quasi Random Geometric Graph ($\mathcal{G}^{QRGG}$)]
Let $r$ and $r'$ be two real numbers in the range $0\le r<r'\le
1$. Let $V$ be a set of $n$ nodes that are selected independently and
uniformly at random from the unit square $[0,1]^2$. If $u$ and $v$ are
two nodes in $V$, then
\begin{enumerate}
    \item $(u, v) \in E$ if $d(u,v) \leq r$;
    \item $(u, v) \notin E$ if $d(u,v) > r'$;
    \item $(u, v) \in E$ with probability $p$ if $r<d(u,v)\leq r'$.
\end{enumerate}
We call $\mathcal{G}^{QRGG}=(V,E)$ a quasi random geometric graph.
\end{defn}
\smallskip

The difference between quasi random geometric graphs and random
geometric graphs is that nodes at distance $d$ within the range
$r<d\leq r'$ may or may not be connected; this models the connectivity
in a more realistic way.

\begin{remark}\label{rem:var_prob}
Instead of having a fixed probability $p$ for the connectivity of
nodes within distance $d$ in the range $r<d\le r'$, one can use a
function $p(d)$ that associates a probability that depends on the
distance to model the attenuation of the signal. Such a change is of
course straightforward. We give one example in
Section~\ref{simulation}.
\end{remark}

In this paper, we consider the problem of
single-source multicasts in such quasi random geometric graphs. Our
main concern is to provide a lower bound for the capacity of network
coding in this situation.  Before defining the capacity, we need to
further detail our model of connectivity.

\begin{defn}[Connectivity Graph]
Let $s$ be a source node, $T$ a set of terminal nodes, and $R$ a set
of relay nodes. We define a connectivity graph $G=(V,E)$ as a graph
with vertex set $V=\{s\}\cup R\cup T$ such that $G\in
\mathcal{G}^{QRGG}$; in particular, the vertices are located in a
unit square.
We assume further that the source node only
sends messages and terminal nodes only receive messages; in
particular, the source and terminal nodes do not relay any messages.
Furthermore, we assume that the source and the terminal nodes do not
communicate directly; thus, any message is routed through at least one
relay node.
\end{defn}

We assume that the edges in the connectivity graph represent links
with unit capacity. Put differently, we assume that the capacity
$C_{ij}$ for $i, j$ in $V$ is given by
$$
C_{ij}=\left\{
\begin{array}{ll}
1 & \hbox{if $(i,j)\in E$,} \\
0 & \hbox{otherwise.}
\end{array}\right.
$$
We note that $C_{ij}=C_{ji}$, since the graph is undirected.

\begin{defn}[A Cut and its Capacity]\label{defn:connect}
Let $G=(V,E)$ be a connectivity graph with source node $s$, a set $T$
of terminal nodes, and a set $R$ of relay nodes such that
$V=\{s\}\cup R\cup T$. Let $t$ be a terminal node in $T$.
An $s$-$t$-cut  of size $k$ in the connectivity graph $G$ is a partition of
the set of relay nodes $R$ into two sets $V_k$ and $\overline{V}_k$ such that
\begin{enumerate}
\item[(i)] $|V_{k}|=k$ and $|\overline{V}_k|=n-k$;
\item[(ii)] $R=V_{k}\cup \overline{V_k}$ and $V_k\cap
\overline{V_k}=\emptyset$.
\end{enumerate}
The edges crossing the cut are given by
\begin{enumerate}
\item $E\cap \{ (s,i) |  i\in \overline{V}_k\}$;
\item $E\cap \{ (j,t) | j \in V_k\}$;
\item $E\cap \{ (j,i) | j \in V_k \text{ and } i\in \overline{V_k}\}$.
\end{enumerate}
In other words, the source node $s$ and the relay nodes $V_k$ are on
one side of the cut, whereas the relay nodes $\overline{V_k}$ and the
terminal node $t$ on the other side of the cut. The total capacity of
an $s$-$t$-cut of size $k$ is given by
\begin{eqnarray}\label{eq:cut}
    C_{k}= \sum_{i \in \overline{V_k}}
    C_{si}
      +\sum_{j \in V_k\vphantom{|}}\sum_{i\in \overline{V_k}} C_{ji}
      +\sum_{j \in V_k}C_{jt}.
\end{eqnarray}
\end{defn}

\section{Bounds and Results}\label{bounds}
In this section, we bound the network coding
capacity of a connectivity graph, where the connections of the relay
nodes form an instance of a quasi random geometric graph.

Let $G=(V,E)$ be a connectivity graph such that the vertex set $V$
consists of a source node $s$, a set of terminal nodes $T$, and a set
of relay nodes $R$, that is, $V=\{s\}\cup T\cup R$. Recall that two
nodes $u$ and $v$ in $G$ are connected by an edge with
probability 1 if $d(u,v)\le r$, with probability $p$ if $r<d(u,v)\leq
r'$, and with probability $0$ otherwise. Therefore, the probability
$p'$ that two nodes $u$ and $v$ are connected can be bounded by
\begin{equation}\label{eq:conn_prob}
\frac{1}{4}\left(\pi r^2+\pi(r'^2-r^2)p\right) \le p' \le \pi
r^2+\pi(r'^2-r^2)p.
\end{equation}
The motivation for the lower bound stems from
the fact that one of the nodes might be located in one of the corners
of the unit square. The upper bound is a straightforward consequence
of our connectivity rules.

These elementary observations allow us to bound the expected value
of the cut $C_k$. By equation~(\ref{eq:cut}), we have
$$
\begin{array}{lcl}
\E[C_k] &=& \displaystyle\sum_{i\in \overline{V_k}} \E[C_{si}]
+\sum_{j\in V_k}\sum_{i\in\overline{V_k}} \E[C_{ji}]
+ \sum_{j\in V_k} \E[C_{jt}]\\
&=& p'(n+k(n-k)).
\end{array}
$$ In particular, $\E[C_k]=\E[C_{n-k}]$ holds for all $k$ in the range $0\le
k\le n$.  Furthermore, we have
$$\E[C_0]=\E[C_n]\le \E[C_1]=\E[C_{n-1}]\le \cdots \le \E[C_{\lceil
n/2\rceil}].$$

Our goal is to prove that the capacity $C_k$ of an $s$-$t$-cut is
concentrated around its expected value.  A technical difficulty arises
because the edges between relay nodes in the graph $G$ are in general
not mutually independent.  Indeed, if two relay nodes $u$ and $v$ are
connected, and $u$ is connected to yet another relay node $w$, then
there is a good chance that $v$ is connected to~$w$. Put differently,
we have
$$\Pr[(v,w)\in E| (u,v)\in E, (u,w)\in E] > \Pr[(v,w)\in E],$$ whence
the three events $(u,v)\in E$, $(u,w)\in E$, and $(v,w)\in E$ are not
independent. In Fig.~\ref{fig:threecases}, we sketch different
geometric situations between two nodes; positioning a node $w$ within
the transmission range of $u$ nicely illustrates the intuition behind
this fact.

\begin{figure}[htbp]
\begin{center}\includegraphics[scale=0.3,width=6cm,height=4cm]{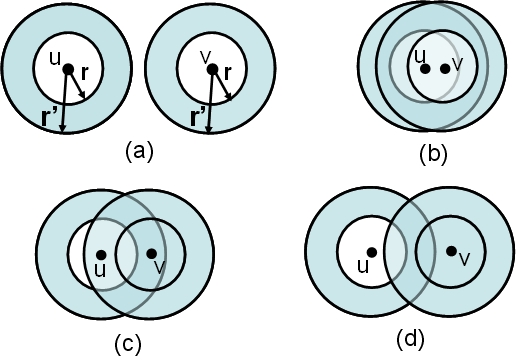}
\caption{Consider two relay nodes $u$ and $v$ of $G|_R$. Subfigure (a)
illustrates the situation when the two nodes are not connected and far
apart.  The other subfigures illustrate the following situations: (b)
$d(u,v)\leq r$, (c) $r<d(u,v)\leq r'$, and (d) $2r<d(u,v)\leq r+r'$.}
    \label{fig:threecases}
\end{center}
\end{figure}

However, certain edges in a connectivity graph are independent.
Indeed, all edges that are incident with a fixed (common) vertex are
independent, since the coordinates of the vertices in the underlying
quasi geometric random graph are chosen independently and uniformly at
random. Consequently, the random variables in the set $\{ C_{ij}\,|\,
j\in I\}$, where $i$ is fixed, are independent. We will take advantage
of this fact in our proof of the concentration result. To that end,
recall Chernoff's bound for sums of independent Bernoulli random
variables.

\begin{lemma}[Chernoff bound] \label{l:chernoff}
Let $X_1,\dots,X_m$ be independent Bernoulli random variables such
that $\Pr[X_k=1]=p'$ and $\Pr[X_k=0]=1-p'$, and let $X=\sum_{k=1}^m
X_k$.  For $0< \epsilon< 1$, we have
$$ \Pr[X\le (1-\epsilon)\E[X]]\le e^{-E[X]\epsilon^2/2}.$$
\end{lemma}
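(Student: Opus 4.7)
The plan is to apply the classical exponential-moment technique. For any parameter $t>0$, the event $X\le(1-\epsilon)\mu$, where $\mu := \E[X] = mp'$, is identical to $e^{-tX}\ge e^{-t(1-\epsilon)\mu}$, so Markov's inequality gives
$$\Pr[X\le(1-\epsilon)\mu] \le e^{t(1-\epsilon)\mu}\,\E[e^{-tX}].$$
First I would invoke independence to factor $\E[e^{-tX}] = \prod_{k=1}^{m}(1-p'+p'e^{-t})$, then apply the scalar inequality $1+x\le e^x$ to each factor, which yields the clean bound $\E[e^{-tX}] \le \exp(\mu(e^{-t}-1))$.

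Next, I would optimize the combined exponent $\mu(e^{-t}-1)+t(1-\epsilon)\mu$ over $t>0$. Setting the derivative to zero yields $t^{*}=\ln(1/(1-\epsilon))$, which is positive precisely because $0<\epsilon<1$, and substituting $t^{*}$ collapses the bound to
$$\Pr[X\le(1-\epsilon)\mu] \le \exp\bigl(\mu\bigl[-\epsilon-(1-\epsilon)\ln(1-\epsilon)\bigr]\bigr).$$

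What remains is the purely analytic inequality $-\epsilon-(1-\epsilon)\ln(1-\epsilon)\le -\epsilon^2/2$ for $\epsilon\in(0,1)$. My plan is to substitute the Taylor series $\ln(1-\epsilon) = -\sum_{j\ge 1}\epsilon^j/j$, multiply through by $(1-\epsilon)$, and collect like powers; a short computation rearranges the left-hand side as $-\sum_{j\ge 2}\epsilon^j/\bigl(j(j-1)\bigr)$, whose leading term is exactly $-\epsilon^2/2$ and whose remaining terms are all negative, so the required inequality is immediate. I expect the Chernoff scaffolding itself to be completely routine, and the only step requiring any care to be this final scalar estimate; however, since every coefficient in the rearranged series is manifestly positive, even that step is transparent, so no genuine obstacle should arise.
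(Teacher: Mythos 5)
Your proof is correct in every step, but note that the paper does not actually prove this lemma at all: its entire ``proof'' is a pointer to Mitzenmacher and Upfal's textbook. So your self-contained derivation supplies what the paper omits, and it is, in essence, the standard argument that the cited reference itself uses. The individual steps all check out: Markov's inequality applied to $e^{-tX}$ for $t>0$; the factorization $\E[e^{-tX}]=\prod_{k=1}^m\bigl(1-p'+p'e^{-t}\bigr)$ by independence; the bound $1+x\le e^x$ giving $\E[e^{-tX}]\le\exp\bigl(\mu(e^{-t}-1)\bigr)$ with $\mu=\E[X]=mp'$; the optimization $t^*=\ln\bigl(1/(1-\epsilon)\bigr)$, which indeed collapses the exponent to $\mu\bigl[-\epsilon-(1-\epsilon)\ln(1-\epsilon)\bigr]$; and the final scalar estimate. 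For that last step your series manipulation is valid: substituting $\ln(1-\epsilon)=-\sum_{j\ge1}\epsilon^j/j$ and multiplying by $(1-\epsilon)$ yields $-\epsilon-(1-\epsilon)\ln(1-\epsilon)=-\sum_{j\ge2}\epsilon^j/\bigl(j(j-1)\bigr)$, whose $j=2$ term is $-\epsilon^2/2$ and whose remaining terms are all negative on $(0,1)$, so the bound $\le e^{-\mu\epsilon^2/2}$ follows. In short: your route buys self-containment at the cost of a page of routine work; the paper's citation buys brevity at the cost of deferring the entire argument elsewhere. Both are legitimate, and there is no gap in what you wrote.
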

\smallskip
\begin{proof}
See, for instance,~\cite[p.~66]{Upfal05} for a proof of this well-known bound.
\end{proof}

\begin{lemma} \label{bigsumprob}
If $Pr[X+Y\leq a] \leq Pr[X\leq \frac{a}{2}] \bigcup Pr[Y\leq \frac{a}{2}]$.
\end{lemma}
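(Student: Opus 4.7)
The plan is to read the statement as the standard union-bound-style inequality
$$\Pr[X+Y\le a] \;\le\; \Pr[X\le a/2] + \Pr[Y\le a/2],$$
with the author's $\bigcup$ interpreted as the set-theoretic union of the two tail events (the subsequent use of the lemma, where $C_k$ is split into two groups of edges that are separately Chernoff-bounded, only makes sense under this additive form). The proof is a single elementary event inclusion followed by monotonicity of probability and the union bound, and requires no assumption of independence, non-negativity, or identical distribution for $X$ and $Y$.

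First I would establish the deterministic event containment
$$\{X+Y\le a\} \;\subseteq\; \{X\le a/2\} \cup \{Y\le a/2\}$$
by contrapositive: on the complement of the right-hand side both $X>a/2$ and $Y>a/2$ hold simultaneously, so $X+Y>a$, which rules out the left-hand event. Taking probabilities on both sides and using monotonicity gives $\Pr[X+Y\le a] \le \Pr[\{X\le a/2\}\cup\{Y\le a/2\}]$, and a final application of the union bound $\Pr[A\cup B]\le \Pr[A]+\Pr[B]$ yields the claim. There is really no obstacle to the argument itself; the only care required is parsing the somewhat garbled statement (the stray leading ``If'' and the $\bigcup$ between two probability values) into the intended inequality, after which the proof is a two-line observation.
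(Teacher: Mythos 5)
Your proof is correct and follows essentially the same route as the paper's: the paper's own (terse) argument is exactly your contrapositive observation that $X>a/2$ and $Y>a/2$ together force $X+Y>a$, hence the event $\{X+Y\le a\}$ is contained in the union of the two tail events. You additionally make explicit the monotonicity and union-bound steps (and the sensible parsing of the garbled statement) that the paper leaves implicit, but this is a matter of polish, not of approach.
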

\smallskip
\begin{proof}
This is because if $X>a/2$ and $Y>a/2$, definitely $X+Y>a$. So if $X+Y<a$,
at least one of $X$ and $Y$ must be less than $a/2$.
\end{proof}
This lemma is quite simple, but it turns out play a crucial role in the proof of following theorem.

\begin{theorem}\label{splitrandom}
For all cuts of size $k$, and all $\epsilon$, we have
$$ \Pr[C_k\le (1-\epsilon)\E[C_k]] \leq e^{-\big(\frac{\epsilon^2 (n-k) p' }{2}-\ln(k+1) \big)}$$
\end{theorem}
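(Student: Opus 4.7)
The plan is to exploit the structure of the cut by decomposing $C_k$ into $k+1$ partial sums, each of which consists only of edges incident to a single common vertex. The paragraph preceding Lemma~\ref{l:chernoff} already observes that, although the edges of a $\mathcal{G}^{QRGG}$ are in general dependent, the edges sharing a fixed endpoint are mutually independent Bernoulli random variables (each with success probability $p'$). Chernoff's bound from Lemma~\ref{l:chernoff} therefore applies cleanly to each partial sum, and the price paid for the decomposition is only the union-bound factor $(k+1)$ that produces the $\ln(k+1)$ term in the exponent of the conclusion.

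Concretely, I would first write
\begin{equation*}
C_k = S_0 + \sum_{j\in V_k} S_j, \qquad S_0 := \sum_{i\in\overline{V_k}} C_{si}, \qquad S_j := C_{jt} + \sum_{i\in\overline{V_k}} C_{ji},
\end{equation*}
so that $S_0$ collects all the cut edges incident to the source, and each $S_j$ collects all cut edges incident to the relay $j\in V_k$. This matches exactly the three classes of edges listed in Definition~\ref{defn:connect}. A short calculation gives $\E[S_0]=(n-k)p'$ and $\E[S_j]=(n-k+1)p'$ for $j\in V_k$, so each $\E[S_j]\ge (n-k)p'$ and $\sum_{j=0}^{k}\E[S_j]=\E[C_k]$.

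Next I would invoke the natural $(k+1)$-term analogue of Lemma~\ref{bigsumprob}: if $S_j>(1-\epsilon)\E[S_j]$ held for every $j$, then summing would contradict $C_k\le (1-\epsilon)\E[C_k]$, so at least one index must fail. A union bound therefore gives
\begin{equation*}
\Pr\bigl[C_k\le (1-\epsilon)\E[C_k]\bigr] \;\le\; \sum_{j=0}^{k} \Pr\bigl[S_j\le (1-\epsilon)\E[S_j]\bigr].
\end{equation*}
Since each $S_j$ is a sum of independent Bernoullis, Lemma~\ref{l:chernoff} yields $\Pr[S_j\le(1-\epsilon)\E[S_j]]\le e^{-\E[S_j]\epsilon^2/2}\le e^{-(n-k)p'\epsilon^2/2}$. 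Summing the $k+1$ identical bounds and rewriting $(k+1)\,e^{-(n-k)p'\epsilon^2/2}$ as a single exponential produces the claimed inequality.

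The only delicate step is the initial decomposition: once the cut edges are partitioned according to a common endpoint, the rest is Chernoff together with a union bound. I expect the main conceptual hurdle to be verifying that this partition exhausts every cut edge exactly once and introduces no dependence within any $S_j$ — both of which follow directly from the definition of $\mathcal{G}^{QRGG}$ and the fact that the node coordinates are drawn independently. The mild slack from bounding $\E[S_j]\ge (n-k)p'$ instead of $(n-k+1)p'$ costs only one Bernoulli per relay and does not affect the stated exponent.
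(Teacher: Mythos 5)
Your proof is correct, and structurally it mirrors the paper's: the same decomposition of $C_k$ into the source sum $S_0$ and the $k$ per-relay sums $S_j$ (this is exactly the paper's reformulation in equation~(\ref{eq:altcut})), the same union bound over $k+1$ events, and the same application of Lemma~\ref{l:chernoff} to each sum of edges sharing a common endpoint. The one genuine difference is how the threshold is apportioned among the $k+1$ events, and your choice is the better one. The paper, via (the $(k+1)$-term analogue of) Lemma~\ref{bigsumprob}, gives every partial sum the \emph{equal} share $(1-\epsilon)\E[C_k]/(k+1)$, whereas you give each sum its \emph{proportional} share $(1-\epsilon)\E[S_j]$. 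The equal split does not quite work: since $\E[C_k]=(k+1)(n-k)p'+kp'$, the equal share satisfies $\E[C_k]/(k+1)=(n-k)p'+kp'/(k+1)>(n-k)p'=\E[S_0]$ for $k\ge 1$, so the event ``$S_0\le(1-\epsilon)\E[C_k]/(k+1)$'' is not of the form required by Lemma~\ref{l:chernoff}, and the paper's intermediate step --- replacing the threshold $(1-\epsilon)\E[C_k]/(k+1)$ by the \emph{smaller} threshold $(1-\epsilon)(k+1)(n-k)p'/(k+1)=(1-\epsilon)(n-k)p'$ inside the probability --- is a wrong-direction monotonicity step: shrinking the threshold shrinks the event, so it can only decrease the probability, not upper-bound it. Your weighted pigeonhole --- if $S_j>(1-\epsilon)\E[S_j]$ for every $j$, then $C_k>(1-\epsilon)\E[C_k]$ because $\sum_j\E[S_j]=\E[C_k]$ --- avoids this defect entirely: every Chernoff application is legitimate exactly as stated, the bounds $e^{-\E[S_j]\epsilon^2/2}\le e^{-(n-k)p'\epsilon^2/2}$ follow from $\E[S_j]\ge(n-k)p'$, and $(k+1)e^{-(n-k)p'\epsilon^2/2}$ is precisely the claimed exponential $e^{-\left(\epsilon^2(n-k)p'/2-\ln(k+1)\right)}$. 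So what your route buys is rigor: it is the version of the paper's argument that actually proves the stated bound, at no extra cost in the exponent. (Both proofs, of course, inherit the paper's standing assumption, stated before Lemma~\ref{l:chernoff}, that edges incident to a fixed vertex are independent Bernoulli variables with common parameter $p'$.)
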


\begin{proof}
Let $C_k$ denote the capacity of an $s$-$t$-cut $(\{s\}\cup V_k;
\overline{V}_k\cup \{t\})$ in the connectivity graph.  Here $s$ is the
source node, $t$ is a terminal node, and $V_k\cup \overline{V}_k$ is a
partitition of the relay nodes into two disjoint sets $V_k$ and
$\overline{V}_k$ that respectively have cardinality $k=|V_k|$ and
$n-k=|\overline{V}_k|$.

We can reformulate equation~(\ref{eq:cut}) in the form
\begin{equation}\label{eq:altcut}
C_k = \sum_{i\in \overline{V}_k} C_{si} +
\sum_{j\in\vphantom{\overline{V}} V_k} \sum_{i\in
\overline{V}_k\cup\{t\}} C_{ji}.
\end{equation}
So following lemma ~\ref{bigsumprob} and the above formula,
if the event
\begin{equation}
C_k\le (1-\epsilon)\E[C_k]
\end{equation}
happens, then at least one of the following $k+1$ simpler events
must happen also
\begin{compactenum}[(i)]
\item $\sum_{i\in \overline{V}_k} C_{si}\le (1-\epsilon)E[C_k]/(k+1)$,
\item $\sum_{i\in \overline{V}_k\cup\{t\} } C_{ji}\le
(1-\epsilon)E[C_k]/(k+1) $,
\end{compactenum}

where $j\in V_k$. Since the left hand side of (i) and (ii) are sums of
independent Bernoulli random variables, we can use
Lemma~\ref{l:chernoff} to bound the probability of these events.
Therefore, we obtain the estimate
$$
\begin{array}{l}
\Pr[C_k\le (1-\epsilon)\E[C_k]]\\[1ex] \displaystyle \le
\Pr[\sum_{i\in \overline{V}_k} C_{si}\le (1-\epsilon) E[C_k]/(k+1)]\\
\displaystyle\phantom{\le}+ \sum_{j\in V_k} \Pr[ \sum_{i\in
\overline{V}_k\cup\{t\} } C_{ji}\le (1-\epsilon) E[C_k] /(k+1)]\\
[2ex] \displaystyle \le \Pr[\sum_{i\in \overline{V}_k} C_{si}\le (1-\epsilon) (k+1)(n-k) p' /(k+1)]\\
\displaystyle\phantom{\le}+ \sum_{j\in V_k} \Pr[ \sum_{i\in
\overline{V}_k\cup\{t\} } (C_{ji}\le (1-\epsilon) (k+1)\\ \hspace{4cm}(n-k+1) p' /(k+1))]\\
\\[2ex] \le \exp(-(n-k)p'\epsilon^2/2)\\
\phantom{\le}+k\exp(-(n-k+1)p'\epsilon^2/2)\\[1ex] \le
\exp(-((n-k)p'\epsilon^2/2 - \ln (k+1))).
\end{array}
$$
\end{proof}

In the next two theorems, we are going to show that the capacity of a
minimum cut is, with high probability, concentrated about the value
$np'=\E[C_0]$. Intuitively, it is not surprising that the bottleneck
is likely going to be the connection from the source to the relay
nodes, so the dissemination of the information is likely to be
limited.

\begin{theorem}
Let $G$ be a connectivity graph with one source node $s$, $n$ relay
nodes, and a set $T$ of terminal nodes. Then,
with probability $1-O(\tau/n^2)$, where $\tau=|T|$,
the network coding
capacity $C_{s,T}$ of $G$ is bounded from below by
$$ C_{s,T} \ge (1-\epsilon)\E[C_0],\quad\text{where}\quad
\epsilon=\sqrt{\frac{4\ln n}{p'(n-k)}},$$
where $p'$ satisfies (\ref{eq:conn_prob}).
\end{theorem}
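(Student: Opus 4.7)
The plan is to reduce the problem to bounding the minimum cut value over all $s$-$t$-cuts and all terminals, and then apply the concentration bound from Theorem~\ref{splitrandom}. By the max-flow min-cut theorem together with the network coding theorem of Ahlswede--Cai--Li--Yeung, the multicast capacity equals $\min_{t\in T}\min_{k}\min_{V_k} C_k$, so the bad event $\{C_{s,T}<(1-\epsilon)\E[C_0]\}$ decomposes as a union, over terminals and cuts, of events of the form $\{C_k<(1-\epsilon)\E[C_0]\}$. A union bound over the $\tau$ terminals is what introduces the factor $\tau$ in the final failure probability.

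For a fixed terminal $t$ and a fixed cut of size $k$, I would first transfer the threshold using $\E[C_k]\ge\E[C_0]$: the event $\{C_k<(1-\epsilon)\E[C_0]\}$ is contained in $\{C_k<(1-\epsilon)\E[C_k]\}$, and Theorem~\ref{splitrandom} bounds the latter probability by $\exp(-\epsilon^2(n-k)p'/2+\ln(k+1))$. Plugging in $\epsilon=\sqrt{4\ln n/(p'(n-k))}$ makes the Chernoff exponent equal to $2\ln n$, so the per-cut failure probability is $(k+1)/n^2$. For the cut $k=0$ (and by symmetry $k=n$) this directly gives the desired $O(1/n^2)$ contribution.

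The main obstacle is the union bound over the $\binom{n}{k}$ subsets $V_k\subseteq R$, which for moderate $k$ grows far faster than $n^2$ and would swamp the per-cut bound. I would address this by exploiting that $\E[C_k]=p'(n+k(n-k))$ is at least roughly $2np'$ for every $k\ge 1$, so obtaining $C_k<(1-\epsilon)\E[C_0]\approx (1-\epsilon)\E[C_k]/2$ requires $C_k$ to deviate from its own mean by a constant fraction rather than by $\epsilon$. Applying Theorem~\ref{splitrandom} with this enlarged deviation parameter yields a failure probability of order $\exp(-\Omega((n-k)p'))$, which absorbs $\binom{n}{k}$ provided $p'$ is at least of order $\log n/n$, the connectivity regime implicit in the problem. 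Summing the contributions from $k=0$ and from $k\ge 1$, and multiplying by $\tau$ for the terminal union bound, yields the claimed $O(\tau/n^2)$.
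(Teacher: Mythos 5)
Your first two steps coincide with the paper's own proof: the reduction of $C_{s,T}$ to $\min_{t\in T}C_{min}(s,t)$ with a union bound over the $\tau$ terminals, and, for a cut of size $k$, the transfer $\{C_k<(1-\epsilon)\E[C_0]\}\subseteq\{C_k<(1-\epsilon)\E[C_k]\}$ followed by Theorem~\ref{splitrandom} with $\epsilon=\sqrt{4\ln n/(p'(n-k))}$, giving a per-cut failure probability of order $(k+1)/n^2$. The paper stops there: it applies the concentration bound only to the minimum cut itself, treating its size $k$ as given (which is why the theorem's $\epsilon$ depends on $k$), and never attempts a union bound over cuts.

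The gap is in your third step, and it is quantitative, not cosmetic. The exponent in Theorem~\ref{splitrandom} is $\epsilon^2(n-k)p'/2-\ln(k+1)$; crucially it scales with $(n-k)p'$, not with $\E[C_k]=p'(n+k(n-k))$, because the proof splits the cut into $k+1$ blocks of at most $n-k+1$ independent variables and bounds each block separately. Hence even with the deviation parameter pushed to its maximum value $1$, the per-cut bound is never smaller than $\exp(-(n-k)p'/2+\ln(k+1))\ge\exp(-np'/2)$. For $k$ near $n/2$ the number of cuts is $\binom{n}{n/2}=e^{n\ln 2-o(n)}$, so your union bound carries a factor of at least $\exp(n\ln 2 - np'/4 - o(n))$, which diverges even when $p'=1$, let alone in the regime $p'=\Theta(\log n/n)$ you invoke, where the exponent is only $O(\log n)$ against $e^{\Theta(n)}$ cuts. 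Your observation that the required deviation becomes a constant fraction (in fact a fraction tending to $1$ for balanced cuts) is correct, but it cannot rescue the argument, since the exponent is capped at $(n-k)p'/2$ no matter how large the deviation is. Making a union bound over all cuts rigorous would require a concentration inequality whose exponent scales like $\E[C_k]\approx p'k(n-k)$, or Karger-style cut-counting (the number of cuts within a factor $\alpha$ of the minimum is at most $n^{2\alpha}$), which is the machinery used in \cite{Karger99} and \cite{Ramamoorthy05}, the line of work this paper builds on. To match the paper's proof you should simply drop the union over cuts and apply Theorem~\ref{splitrandom} to the minimum cut directly, as the paper does --- noting that this is exactly the point at which the paper itself is least rigorous, since the minimum cut is a random object and a fixed-cut bound does not formally apply to it without some form of union argument.
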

\begin{proof}
Let $C_{min}(s,t)$ denote the capacity of a minimum $s$-$t$-cut.  Let us
assume further that this minimum cut has size $k$, that is,
$C_{min}(s,t)=C_k$.  Since $\E[C_k]\ge \E[C_0]$ holds for all $k$ in the
range $0\le k\le n$, we have
$$
\begin{array}{l@{}l}
\Pr[C_{min}&(s,t) <(1-\epsilon)\E[C_0]]\le
\Pr[C_k <(1-\epsilon)\E[C_k]]\\[1ex]
&\le \Pr[|C_k-\E[C_k]|>\epsilon\E[C_k]] \\[1ex]
&< 2\exp\left( (-(n-k)p'\epsilon^2/2 + \ln(k+1))\right),
\end{array}
$$ where the last inequality is due to
Theorem~\ref{splitrandom}.
Substituting the value of $\epsilon$ from the hypothesis yields
$$
\begin{array}{l@{\,\,}c@{\,\,}l}\Pr[C_{min} <(1-\epsilon)\E[C_0]] &<&
2\exp(-2\ln n)\\
&=& O(1/n^2).
\end{array}
$$ Consequently, the probability that the network coding capacity
$C_{s,T}$ will be below the value $(1-\epsilon)\E[C_0]$ can be bounded by
$$
\begin{array}{l@{}l}
\Pr[C_{s,T}&<(1-\epsilon)\E[C_0]] \\
& \le  \displaystyle
\Pr\!\left[\bigcup_{t\in T}(C_{min}(s,t)<(1-\epsilon)\E[C_0])\right] \\[1ex]
&\le \displaystyle
\sum_{t\in T}
\Pr\!\left[C_{min}(s,t)<(1-\epsilon)\E[C_0]\right] \\[1ex]
&= O(\tau /n^2),
\end{array}
$$
as claimed.
\end{proof}

We complement the above lower bound by a high-probability upper bound
on the network coding capacity.

\begin{theorem}
Let $G$ be a connectivity graph with one source node $s$, $n$ relay
nodes, and a set $T$ of terminal nodes. Then,
with probability $1-O(1/n^{4/3})$,
the network coding
capacity $C_{s,T}$ of $G$ is bounded from above by
$$ C_{s,T} \le (1+\epsilon)\E[C_0],\quad\text{where}\quad
\epsilon=\sqrt{\frac{4\ln n}{ \E[C_0]}},$$
where $p'$ satisfies (\ref{eq:conn_prob}).
\end{theorem}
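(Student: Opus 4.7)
The plan is to mirror the structure of the preceding lower-bound theorem but to exploit a useful asymmetry. The key observation is that the trivial partition with $V_0 = \emptyset$ and $\overline{V}_0 = R$ is simultaneously an $s$-$t$-cut for \emph{every} terminal $t \in T$: the source $s$ sits alone on one side of the partition, and the edges crossing the cut are precisely $\{(s,i) : i \in R\}$. Consequently, for every realization of the graph, the max-flow min-cut theorem for network coding gives the deterministic bound $C_{s,T} \le C_0$. No union bound over $T$ is required, which is why the failure probability will not pick up any factor of $\tau = |T|$.

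Next, I would note that $C_0 = \sum_{i \in R} C_{si}$ is a sum of $n$ independent Bernoulli random variables with success probability $p'$, since (as the paper already emphasizes) all edges incident with a common vertex are mutually independent. Hence $\E[C_0] = np'$. To bound the upper tail I would invoke the companion of Lemma~\ref{l:chernoff}, namely
$$\Pr[X \ge (1+\epsilon)\E[X]] \le \exp(-\E[X]\epsilon^2/3)$$
for $0 < \epsilon \le 1$, which is the standard one-sided Chernoff bound (a proof appears in the same reference as Lemma~\ref{l:chernoff}).

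Substituting $\epsilon = \sqrt{4\ln n / \E[C_0]}$ gives $\E[C_0]\epsilon^2 = 4\ln n$, and therefore $\exp(-\E[C_0]\epsilon^2 / 3) = n^{-4/3}$. Combining this with the deterministic bound $C_{s,T} \le C_0$ yields
$$\Pr[C_{s,T} > (1+\epsilon)\E[C_0]] \le \Pr[C_0 > (1+\epsilon)\E[C_0]] \le n^{-4/3},$$
which is the desired $O(n^{-4/3})$ bound.

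The main obstacle in the lower-bound proof (decomposing a general cut of size $k$ into sums of independent Bernoullis via Lemma~\ref{bigsumprob}, and then absorbing a union bound over all terminals) does \emph{not} reappear here, since a single cut of size zero already witnesses the upper bound. The only mild subtlety is recognizing that the right quantity to attack is the capacity of the cut around the source alone, and that its purely additive form admits a direct application of the upper-tail Chernoff inequality; the standard constant $1/3$ in that inequality is exactly what produces the $4/3$ in the failure exponent.
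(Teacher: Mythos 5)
Your proposal is correct and follows essentially the same route as the paper: bound $C_{s,T}$ by the capacity of the cut $(\{s\};R\cup T)$, note that the $C_{sr}$ are independent Bernoulli($p'$) variables, and apply a Chernoff bound with the constant $3$ in the denominator to get $O(n^{-4/3})$. The only cosmetic difference is that you invoke the one-sided upper-tail bound directly, whereas the paper passes to the absolute deviation $|\sum_{r\in R}C_{sr}-\E[C_0]|$ and uses the two-sided form (picking up a harmless factor of $2$); the substance is identical.
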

\begin{proof}
If the network coding capacity $C_{s,T}$ exceeds the value
$(1+\epsilon)\E[C_0]$, then the capacity of any $s$-$t$-cut, for any $t\in T$,
must exceed that value as well; in particular,
the cut $(\{s\}; R\cup T)$ must have capacity
exceeding $(1+\epsilon)\E[C_0]$. Since we assume that the source node
is not directly connected to any terminal node, we obtain
$$
\begin{array}{l@{}l}
\Pr[&C_{s,T} > (1+\epsilon)\E[C_0]]\\
& \le \Pr[\sum_{r\in R} C_{sr} > (1+\epsilon)\E[C_0]] \\
& \le \Pr[|\sum_{r\in R} C_{sr} - \E[C_0]|> \epsilon\E[C_0]]
\end{array}
$$ The indicator random variables $C_{sr}$, with $r\in R$, are
mutually independent, as the location of the relay nodes are
independently and identically distributed in the unit square. Recall
that the Chernoff bound for independent identically distributed
indicator random variables $X_i$ with $\Pr[X_i=1]=p'$ is given by
$\Pr[|\sum_{i=1}^n X_i-np'|>t]< 2\exp(-t^2/3np')$.  Applying this
Chernoff bound to the indicator random variables $C_{sr}$ yields
$$
\begin{array}{l@{}l}
\Pr[|\sum_{r\in R}&  C_{sr} - \E[C_0]|> \epsilon\E[C_0]] \\
& < 2 \exp(-\epsilon^2 \E[C_0]^2/ (3\E[C_0]))\\
& = \displaystyle 2\exp\left(-\frac{4\ln n}{E[C_0]}\frac{\E[C_0]^2}{3\E[C_0]}\right)\\
& = O(n^{-{4/3}}),
\end{array}
$$
which proves the claim.
\end{proof}

\begin{remark}
Our results easily generalize to more general substrates of unit area
(not just unit squares), as long as the assumption holds
that the nodes are uniformly distributed over the area. The
concentration results are not affected by such a change, but the
connectivity probability $p'$ might be dramatically different.
For instance, if the area is a rectangle that is $\varepsilon$ high and
$1/\varepsilon$ wide, then $p'$ approaches $0$ as $\varepsilon$ approaches $0$.
\end{remark}

\section{Simulations and Experiments}\label{simulation}
We conducted simulations for various instances of $\mathcal{G}^{QRGG}$
using different parameters. Our simulation results support the high
probability bounds on the network coding capacity given in
Theorems~9 and~10.

In a first experiment, we determined the minimum capacity of an
$s$-$t$ cut for different instances of a connectivity graph in
$\mathcal{G}^{QRGG}$ with a fixed number of nodes. Fig.~\ref{fig:fig2}
shows the results of such an experiment with $n=200$ relay nodes. The
radio transmission range is chosen such that within a radius of
$r=0.1$ the connectivity is guaranteed and up to a radius of $r'=0.2$
one might get connected. The plot shows that the capacity of the
network is concentrated around the expected value of $13$ which is in
agreement with Theorem~9 and~10 for the above values of $n$, $r$
and $r'$.

\begin{figure}[htbp]
\begin{center}
\includegraphics[scale=0.4]{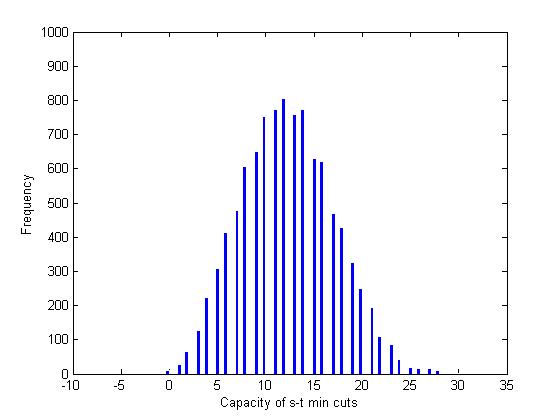}
    \caption{n=200, r=0.1, $r'=0.2$}
    \label{fig:fig2}
\end{center}
\end{figure}

Fig.~\ref{fig:fig3} shows the result of a second experiment.  This
time, the number of relay nodes is once again $n=200$, but the
transmission range is higher, namely the inner radius equals $r=0.13$
and outer radius equals $r'=0.18$. We generated random instances of
$\mathcal{G}^{QRGG}$ with these parameters and determined the minimum
cut. One can easily see that the capacity of the network is likely to
be higher, as expected.
\begin{figure}[htbp]
\begin{center}
\includegraphics[scale=0.35]{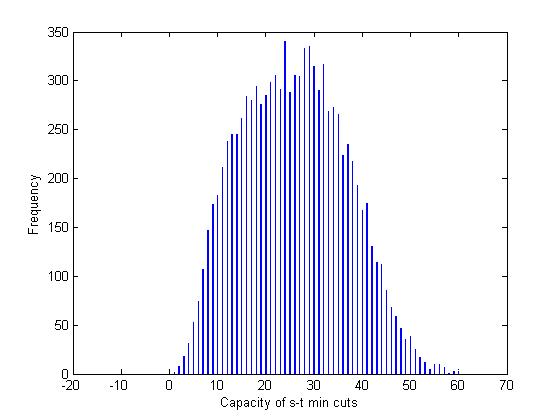}
    \caption{n=200, r=0.13, $r'=0.18$}
    \label{fig:fig3}
\end{center}
\end{figure}

For larger $n$, we could observe that the histograms become more
concentrated around the expected capacity of a minimum cut, as
predicted by our theory.

In a third series of experiments, we simulated the increase of
capacity of the minimum cut for different values of $r$ and $n$. In
this case, we also modeled the connectivity probability as a
decreasing function of distance, following Remark~\ref{rem:var_prob},
$$p=\left(1-\sqrt{\frac{d(i,j)^{2}-r^{2}}{r'^{2}-r^{2}}}\right)
p_{connection},$$ where $d(i,j)$ is the Euclidean distance between any
two nodes $i$ and $j$ such that $r<|d(i,j)|<r'$, and $p_{connection}$
is a probability that accounts for the interference
noise in communication.
\begin{figure}[htbp]
\begin{center}
\includegraphics[scale=0.35]{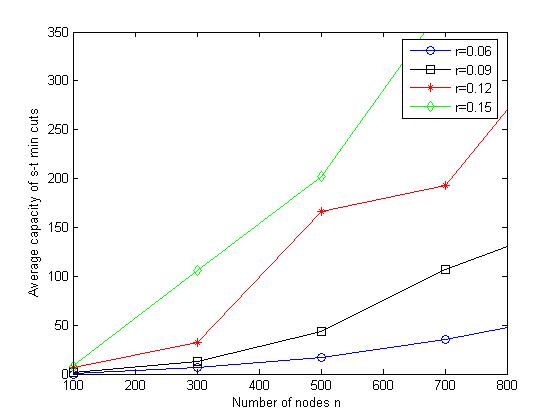}
    \caption{The capacity of s-t minimum cuts with different values of
    $n$ and $r$}
    \label{Fig:cut_n_r}
\end{center}
\end{figure}

As it can be seen from Fig. \ref{Fig:cut_n_r}, the value of the
capacity grows more rapidly for lower values of $r$. This is intuitive
because in that case not many nodes are connected for small values of
$n$. As we increase $n$ but keep $r$ constant, the capacity of the
minimum cut must increase, since more and more nodes are packed in the
same area.

\section{Conclusion}
\label{conclusion} We modeled a quasi wireless random network and showed that the
capacity of the minimum cut of network coding is concentrated around the value
$np'=\E[C_0]$. Unlike prior works, we obtained high probability bounds for this
model. More realistic models (for example, when the probability of connectivity
drops exponentially with distance to account
for signal attenuation) can be easily incorporated into our framework
without changing the theory in a significant way.

\IEEEtriggeratref{3}
\bibliographystyle{plain}

\end{document}